\documentclass[11pt]{article}
\usepackage[a4paper, total={6in, 8in}]{geometry}
\usepackage{authblk}
\usepackage{graphicx} 
\usepackage{amsmath}
\usepackage{amsthm}
\usepackage{amssymb}
\usepackage{enumerate}
\usepackage{url}
\usepackage{placeins}
\usepackage{natbib}
\usepackage[colorlinks=true, allcolors=blue]{hyperref}
\setcitestyle{authoryear,aysep={}}
\newtheorem{prop}{Proposition}
\newtheorem{remark}{Remark}
\newcommand{\ind}{\perp\!\!\!\!\perp}
\allowdisplaybreaks

\title{Efficient Bayesian Inference for Benter Models\\ on Ranked Data}
\author{Michael Pearce}
\affil{Department of Mathematics and Statistics\\ Reed College}
\date{August 2026}

\begin{document}

\maketitle

\bigskip
\begin{abstract}
The Benter model for partial and complete rankings generalizes the well-known Plackett-Luce by attaching rank-level dampening parameters that permit some ranking stages to be noisier than others. This extension is empirically important for domains ranging from horse racing to ranked-choice elections. However, model fitting is made challenging by the fractional powers these dampening parameters introduce to the likelihood, breaking the conjugacy underlying existing Plackett-Luce samplers. This paper develops an efficient Bayesian estimation procedure for the Benter model. We introduce a two-part augmentation scheme using positive $\alpha$-stable and exponential auxiliary variables that linearizes the intractable normalizers in the Benter likelihood and yields closed-form Gibbs updates. A simulation study confirms efficient estimation, accurate parameter recovery, and nominal credible-interval coverage across sample sizes and item counts. We illustrate the algorithm on complete and partial rankings from survey preference and ranked-choice election datasets. 
\end{abstract}

\noindent%
{\it Keywords:}  rankings, preference aggregation, Gibbs sampling, data augmentation

\bigskip

\section{Introduction}

Ordinal preference data arises in a wide variety of applications, including in psychological and behavioral experiments, ranked-choice elections, and sports. In such cases, we often assume that independent processes (e.g., individuals, voters, athletic games) rank various items (e.g., options, candidates, teams). Parametric statistical models are useful to estimate or predict population-level preference parameters, potentially with their associated uncertainty, such as an overall ordering of the items from best to worst. An overview of such models can be found in reviews by \cite{Marden1996} and \cite{alvo2014statistical}.

The Plackett-Luce model \citep{plackett1975analysis,luce1959individual} is widely-used and studied as a distribution for rankings: Let $\mathcal{J} = \{1,\dots,J\}$ index a set of items and let $\pi$ be an observed ranking, where $\pi(r) \in \mathcal{J}$ denotes the item assigned rank $r$ in $\pi$. Furthermore, let $\lambda\in\mathbb{R}^J_{>0}$ be a vector of item-specific \textit{worth} parameters, where higher values indicate a more-preferred object. Then, the Plackett-Luce assumes $\pi$ is drawn according to
\begin{equation}\label{eq:PL}
    p[\pi|\lambda] = \prod_{r=1}^{J-1} \frac{\lambda_{\pi(r)}}{\sum_{k=r}^J\lambda_{\pi(k)}}
\end{equation}
where the $J^\text{th}$ product term is identically one and thus not included. Despite the relative simplicity of the model, estimation is challenging due to the denominators $\sum_{k=r}^J\lambda_{\pi(k)}$ that are burdensome to compute at scale and do not permit direct maximization of the likelihood. In the frequentist setting, a Minorization-Maximization (MM) algorithm was applied by \cite{Hunter2004} for more efficient maximum likelihood estimation. In the Bayesian setting, \cite{caron2012efficient} introduced a family of Exponential auxiliary variables with rates each equal to the denominator in (\ref{eq:PL}) for each ranking and rank-level. In tandem with independent Gamma priors on each worth parameter $\lambda_j$, the augmented joint likelihood undergoes massive simplification and enables closed-form Gibbs sampling. Their work was later extended to the partial rankings and mixture model settings in \cite{mollica2017bayesian} and implemented in the \texttt{PLMIX}\footnote{As of this writing, \texttt{PLMIX} has been removed from CRAN but is still available in archived form.} R package \citep{Mollica2020PLMIX}.

The Benter model \citep{benter1994computer} is an important extension of the Plackett-Luce. The Benter adds rank-level specific \textit{dampening} parameters, $\alpha_r\in(0,1]$, $1\leq r\leq J-1$, which introduce additional variation in the selections made at rank $r$ when $\alpha_r<1$; smaller values indicate greater variability. That is, the model permits cases where the top choice is made and subsequent choices are, potentially, noisier. Specifically, the Benter assumes $\pi$ is drawn according to
\begin{equation}\label{eq:Benter_simple}
    p[\pi|\lambda,\alpha] = \prod_{r=1}^{J-1} \frac{\lambda_{\pi(r)}^{\alpha_r}}{\sum_{k=r}^J\lambda_{\pi(k)}^{\alpha_r}}.
\end{equation}
Although originally developed for horse betting, the Benter model has proven useful in ranked-choice election modeling \citep{gormley2008exploring,gormley2008mixture,pearce2026can}, where voters may be assumed to have limited information on lower-ranked candidates and thus rank them more randomly than their most-preferred. 

Efficient estimation of the Benter model is elusive. Rank-specific dampening parameters further complicate the likelihood, making likelihood maximization more difficult and eliminating the conjugacy obtained by \cite{caron2012efficient}. \cite{gormley2008exploring} fit Benter models via an MM algorithm and produce approximate standard errors in an EM-based approach. \cite{pearce2026can} obtain point estimates for Benter models via an EM algorithm that employs numerical optimization in R. \cite{vanichbuncha2017modelling} compares an MM-based approach to maximum likelihood estimation with numerical optimization, finding the latter to be faster (but still slow). We are unaware of any Bayesian methods to fit Benter models, nor of any software to fit them in either the frequentist or Bayesian context. This has left the Benter model without full Bayesian uncertainty quantification, forcing practitioners to rely on point estimates and asymptotic standard errors from EM-based approaches.

In this paper, we propose a computationally-efficient Bayesian estimation algorithm for the Benter model. The algorithm is based on a two-part data augmentation scheme that yields a blocked Gibbs sampler for the model parameters. We demonstrate that the algorithm quickly and accurately recovers a full posterior distribution of the model parameters in a simulation study, permitting both point estimation and inference based on the Benter model. Furthermore, we exhibit its use on two real datasets in preferences and ranked-choice elections, including on complete and top-$r$ rankings.

The paper proceeds as follows. Section \ref{section:BayesianBenter} introduces the Bayesian Benter model, including the data augmentation scheme and prior specification. We develop an efficient Gibbs sampler in Section \ref{section:Sampler}, and demonstrate its speed and accuracy in analyses of simulated and real data in Section \ref{section:DataAnalyses}. The paper concludes with a brief discussion in Section \ref{section:Discussion}.

\section{The Bayesian Benter model}\label{section:BayesianBenter}

We first re-introduce the Benter model in greater generality of notation. Let $\mathcal{J} = \{1,\dots,J\}$ index a set of items and $\pi_1,\dots,\pi_n$ be observed rankings, where $\pi_i(r) \in \mathcal{J}$ denotes the item in rank $r$ of voter $i$.
Let $r_i=|\pi_i|$ be the length of ranking $\pi_i$. When $r_i=J$ the ranking is called \textit{complete}, and when $r_i<J-1$ the ranking is a top-$r_i$ \textit{partial} ranking.\footnote{Note that when $r_i=J-1$, the unranked item is naturally in last place, making such rankings complete.} 
Furthermore, we define a \textit{risk set} $R_{ir}$ for voter $i$ in stage $r$ according to
$$
R_{ir} = \mathcal{J} \setminus \{\pi_i(1),\dots,\pi_i(r-1)\}
$$
which denotes the set of items not yet ranked in advance of stage $r$, and define $w_{ir}\equiv \pi_i(r)$ as the stage winner.

The Benter model \citep{benter1994computer} assigns each item $j\in\mathcal{J}$ a positive \textit{worth} parameter $\lambda_j$, and each ranking stage a \textit{dampening} parameter $\alpha_r\in(0,1]$. The likelihood of observing ranking $\pi_i$ is assigned:
\begin{equation}\label{benter_likelihood}\
    p(\pi_i \mid \lambda, \alpha) \;=\; \prod_{r=1}^{\min(r_i,J-1)} \frac{\lambda_{w_{ir}}^{\alpha_r}}{\sum_{j \in R_{ir}} \lambda_j^{\alpha_r}}.
\end{equation}
Note that the product terminates at $r = \min(r_i,J-1)$ because the $J^\text{th}$ factor is identically one; in what follows, for notational simplicity we will always terminate the product at $r_i$. Setting all $\alpha = 1$ recovers the Plackett–Luce model; as $\alpha_r\to0$ stage $r$ approaches uniform choice over $R_{ir}$.

Without further restrictions, the Benter likelihood in (\ref{benter_likelihood}) is unidentifiable. First, multiplicative rescaling $\lambda_j \mapsto c\lambda_j$ cancels within each stage. Second, the power transformation $(\lambda_j, \alpha_r) \mapsto (\lambda_j^{\beta}, \alpha_r/\beta)$ leaves every factor unchanged. We address these issues by constraining $\sum_j\lambda_j=1$ and fixing $\alpha_1\equiv 1$. We note that the former constraint on $\lambda$ may be imposed indirectly in the Bayesian context by post-hoc normalization of posterior samples of $\lambda$.

In summary, the Benter model has $2J-3$ free parameters: $J$ \textit{worth} parameters (of which only $J-1$ are free) and $J-2$ \textit{dampening} parameters (since $\alpha_1\equiv1$ and stage $J$ contributes no likelihood factor).

\subsection{Prior specification}

Each object worth, $\lambda_j$, is a positive real. Thus, we assign independent Gamma priors,
$$\lambda_j\overset{iid}\sim \text{Gamma}(a,b), \qquad j = 1,\dots,J,
$$
parameterized by shape $a$ and rate $b$. The Gamma priors mimic those imposed by \cite{caron2012efficient} and \cite{mollica2017bayesian} for the Plackett-Luce distribution. We will soon see that these priors permit independent Gibbs sampling of each $\lambda_j$ via their full conditionals. 

Rank-level dampening parameters, $\alpha_r$, are restricted to the unit interval. Thus, we assign independent Beta priors,
$$\alpha_r\overset{iid}\sim \text{Beta}(c,d), \qquad r = 2,\dots,J-1,$$
where $c=d=1$ imposes a Uniform prior on the unit interval. To our knowledge, no continuous prior yields a recognizable conditional for $\alpha_r$. Thus, we choose a simple prior on the proper domain and update via Metropolis-Hastings.

\subsection{Augmented Benter model}

Posterior computation under the Benter likelihood in  (\ref{benter_likelihood}) is impeded by the normalizing sums $\sum_{j\in R_{ir}} \lambda_j^{\alpha_r}$, which are neither conjugate to the Gamma prior nor separable across items. We will sidestep this challenge by introducing two auxiliary variables. 

As a preliminary, for $\alpha\in(0,1]$ let $f_\alpha$ denote the density of the positive (one-sided) $\alpha$-stable distribution, $\text{PS}(\alpha)$, on $(0,\infty)$ \citep{devroye2009random}. No explicit formula characterizes $f_\alpha$ for general $\alpha$. Instead, the PS random variable may be characterized by its Laplace transform,
\begin{equation}\label{PS_laplace}
  \int_0^\infty e^{-ts} f_\alpha(s)\, ds = e^{-t^{\alpha}}, \qquad t \ge 0 .  
\end{equation}
Furthermore, differentiating (\ref{PS_laplace}) in $t$ yields the companion identity
\begin{equation}\label{PS_laplace_differentiated}
    \int_0^\infty s\, e^{-ts} f_\alpha(s)\, ds = \alpha\, t^{\alpha - 1} e^{-t^{\alpha}}.
\end{equation}
Note that when $\alpha=1$, the PS random variable is degenerate at $1$.

We are now ready to present an augmented Benter model with two families of latent variables, $S$ and $V$. Assume the following generative model for rankings $\pi$ given parameters $\lambda,\ \alpha$: For each voter $i$ and rank level $r$,
\begin{equation}\label{augmented_benter}
\begin{aligned}
S_{irj} \mid \alpha_r &\overset{ind}\sim \mathrm{PS}(\alpha_r), \qquad \qquad\qquad\qquad  j \in R_{ir}, \\[2pt]
V_{ir} \mid S, \lambda &\overset{ind}\sim \mathrm{Exp}\!\Big( \textstyle\sum_{k \in R_{ir}} \lambda_k S_{irk} \Big)\\
\pi_{ir}|S,\lambda &\sim \text{Categorical over $R_{ir}$ with Pr}(w_{ir}=j)\\
\end{aligned}
\end{equation}
The augmented joint likelihood of $\pi,V,S|\lambda,\alpha$ is thus
\begin{align}
    p(\pi,V,S|\lambda,\alpha) &= p(\pi,V|S,\lambda)p(S|\alpha)\nonumber \\
    &=\prod_{i=1}^n\prod_{r=1}^{r_i}\Big[\frac{\lambda_{w_{ir}}S_{ir{w_{ir}}}}{\sum_{k \in R_{ir}} \lambda_k S_{irk}}\Big(\sum_{k \in R_{ir}} \lambda_k S_{irk}\Big)e^{-V_{ir}\sum_{k \in R_{ir}} \lambda_k S_{irk}}\prod_{j\in R_{ir}}f_{\alpha_r}(S_{irj})\Big]\nonumber \\
    &=\prod_{i=1}^n\prod_{r=1}^{r_i}\Big[\lambda_{w_{ir}}S_{ir{w_{ir}}}\prod_{k\in R_{ir}}e^{-V_{ir}\lambda_k S_{irk}}f_{\alpha_r}(S_{irk})\Big].\label{augmented_joint_simplified}
\end{align}
We demonstrate the augmented Benter model in (\ref{augmented_benter}) is valid via Proposition \ref{prop_validity_aug_benter}.
\begin{prop}[Validity of Augmented Benter] \label{prop_validity_aug_benter}
    Let $\alpha_r\in(0,1]$ and $\lambda_j>0$, $j=1,\dots,J$ be fixed. Then, marginalizing out $(V,S)$ in the augmented Benter model in equation (\ref{augmented_benter}) recovers the Benter likelihood in (\ref{benter_likelihood}).
\end{prop}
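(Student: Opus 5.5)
The plan is to work from the simplified augmented joint likelihood (\ref{augmented_joint_simplified}). The augmented model is conditionally independent across voters $i$ and stages $r$ given $(\lambda,\alpha)$, so it suffices to fix a single pair $(i,r)$ and show
\[
\int\!\!\int \lambda_{w_{ir}}S_{irw_{ir}}\prod_{k\in R_{ir}}e^{-V_{ir}\lambda_kS_{irk}}f_{\alpha_r}(S_{irk})\,dS\,dV_{ir} \;=\; \frac{\lambda_{w_{ir}}^{\alpha_r}}{\sum_{j\in R_{ir}}\lambda_j^{\alpha_r}} .
\]
Taking the product over $(i,r)$ then recovers (\ref{benter_likelihood}). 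Since every integrand is nonnegative, Tonelli's theorem allows the integrals to be taken in any order. I will integrate out $S$ first with $V_{ir}=v$ held fixed, and integrate over $v$ last.

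For fixed $v>0$ the integrand factorizes over $k\in R_{ir}$, so the $S$-integral becomes a product of one-dimensional integrals. For each loser $k\neq w_{ir}$, the Laplace transform (\ref{PS_laplace}) with $t=v\lambda_k$ gives $e^{-v^{\alpha_r}\lambda_k^{\alpha_r}}$. For the winner, the extra factor $S_{irw_{ir}}$ calls for the differentiated identity (\ref{PS_laplace_differentiated}) with $t=v\lambda_{w}$ (writing $w=w_{ir}$). That factor is then $\alpha_r (v\lambda_w)^{\alpha_r-1}e^{-v^{\alpha_r}\lambda_w^{\alpha_r}}$. Combining the pieces, the $S$-marginal equals
\[
\alpha_r\, \lambda_w^{\alpha_r}\, v^{\alpha_r-1}\exp\Big(-v^{\alpha_r}\textstyle\sum_{j\in R_{ir}}\lambda_j^{\alpha_r}\Big).
\]

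The remaining step is the $v$-integral. With the substitution $u=v^{\alpha_r}$, so that $du=\alpha_r v^{\alpha_r-1}dv$, the integral becomes $\lambda_w^{\alpha_r}\int_0^\infty e^{-uT}du=\lambda_w^{\alpha_r}/T$, where $T=\sum_{j\in R_{ir}}\lambda_j^{\alpha_r}$. This is exactly the Benter stage factor.

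The only delicate point is the degenerate case $\alpha_r=1$, where PS is a point mass and $f_1$ is not a density. I would handle it by reading the identities (\ref{PS_laplace})--(\ref{PS_laplace_differentiated}) as integrals against the distribution of $S$, which still hold, so the same computation goes through. The case also checks directly against Caron--Doucet. I expect the main obstacle to be bookkeeping rather than analysis: one must recognize that the $(\sum\lambda_kS_k)$ normalizer cancels against the exponential rate, which is already done in (\ref{augmented_joint_simplified}), so that the winner's factor $S_{irw}$ is precisely what the differentiated Laplace identity absorbs.
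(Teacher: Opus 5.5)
Your proposal is correct and follows essentially the same route as the paper: factorize over $(i,r)$, integrate out each loser's $S_{irk}$ with the Laplace identity (\ref{PS_laplace}) and the winner's $S_{irw_{ir}}$ with the differentiated identity (\ref{PS_laplace_differentiated}), then evaluate the $V_{ir}$ integral via the substitution $u=V_{ir}^{\alpha_r}$. Your Tonelli justification for reordering the integrals and your handling of the degenerate $\alpha_r=1$ case (reading $f_1$ as a point mass) are small refinements that the paper leaves implicit.
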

\begin{proof}
    Note that
    \begin{align*}
      p(\pi|\lambda,\alpha) &= \int_V \int_S p(\pi,S,V|\lambda,\alpha)\ dS\ dV\\
      &= \int_V \int_S\prod_{i=1}^n\prod_{r=1}^{r_i}\Big[\lambda_{w_{ir}}S_{ir{w_{ir}}}\prod_{k\in R_{ir}}e^{-V_{ir}\lambda_k S_{irk}}f_{\alpha_r}(S_{irk})\Big]\ dS\ dV \\
      &= \prod_{i=1}^n\prod_{r=1}^{r_i}\int_{V_{ir}=0}^\infty \lambda_{w_{ir}} \int_{S_{irw_{ir}}=0}^\infty S_{ir{w_{ir}}}e^{-V_{ir}\lambda_{w_{ir}} S_{ir{w_{ir}}}}f_{\alpha_r}(S_{ir{w_{ir}}}) \ dS_{irw_{ir}} \times \\
      &\qquad \qquad\prod_{k\in R_{ir}\setminus\{w_{ir}\}}\int_{S_{irk}=0}^\infty e^{-V_{ir}\lambda_k S_{irk}}f_{\alpha_r}(S_{irk})\ dS_{irk} \ dV_{ir},\\
      \intertext{where the second line holds via (\ref{augmented_joint_simplified}). For each integral on $S_{irk}$, $k\in R_{ir}\setminus\{w_{ir}\}$, we utilize the identity in (\ref{PS_laplace}) with $t=\lambda_kV_{ir}$; for the integral on $S_{irw_{ir}}$ we utilize (\ref{PS_laplace_differentiated}) with $t=\lambda_{w_{ir}}V_{ir}$:}
      &= \prod_{i=1}^n\prod_{r=1}^{r_i}\int_{V_{ir}=0}^\infty \lambda_{w_{ir}}\alpha_r(\lambda_{w_{ir}}V_{ir})^{\alpha_r-1}e^{-(\lambda_{w_{ir}}V_{ir})^{\alpha_r}}\prod_{k\in R_{ir}\setminus\{w_{ir}\}}e^{-(\lambda_kV_{ir})^{\alpha_r}} \ dV_{ir}\\
      &= \prod_{i=1}^n\prod_{r=1}^{r_i}
      \lambda_{w_{ir}}^{\alpha_r} \int_{V_{ir}=0}^\infty \alpha_r V_{ir}^{\alpha_r-1}e^{-V_{ir}^{\alpha_r}\sum_{k\in R_{ir}}\lambda_k^{\alpha_r}} \ dV_{ir}\\
      \intertext{We evaluate this integral via $u$-substitution with $u=V_{ir}^{\alpha_r}$ and $du=\alpha_rV_{ir}^{\alpha_r-1}dV_{ir}$:}
      &= \prod_{i=1}^n\prod_{r=1}^{r_i}
      \lambda_{w_{ir}}^{\alpha_r} \int_{u=0}^\infty e^{-u\sum_{k\in R_{ir}}\lambda_k^{\alpha_r}} \ du \\
      &= \prod_{i=1}^n\prod_{r=1}^{r_i}\frac{\lambda_{w_{ir}}^{\alpha_r}}{\sum_{k\in R_{ir}}\lambda_k^{\alpha_r}},
    \end{align*}
    which is identical to the Benter likelihood in (\ref{benter_likelihood}).
\end{proof}

\section{An efficient Metropolis-within-Gibbs sampler}\label{section:Sampler}

In this section, we will develop an efficient Bayesian estimation procedure for the Benter model in \ref{benter_likelihood}. We start with presenting the augmented joint distribution of data $\pi$, augmented variables $S$ and $V$, and parameters $\lambda$ and $\alpha$, followed by the associated full conditional distributions. Then, we present a blocked Gibbs sampler.

\subsection{Augmented joint distribution and full conditionals}
    
The augmented joint distribution of $(\lambda,\alpha,V,S,\pi)$ may be written:
\begin{align}
    p(\lambda,\alpha,V,S,\pi) &= p(V,S,\pi|\lambda,\alpha)p(\lambda)p(\alpha)\nonumber\\
    &= \prod_{i=1}^n\prod_{r=1}^{r_i}\Big[\lambda_{w_{ir}}S_{ir{w_{ir}}}e^{-V_{ir}\sum_{k \in R_{ir}}\lambda_k S_{irk}}\prod_{k\in R_{ir}}f_{\alpha_r}(S_{irk})\Big] \times \label{augmented_joint} \\
    &\qquad \prod_{j=1}^J \frac{b^a}{\Gamma(a)}\lambda_j^{a-1}e^{-b\lambda_j}\times\prod_{r=2}^{J-1}\frac{\Gamma(c+d)}{\Gamma(c)\Gamma(d)}\alpha_r^{c-1}(1-\alpha_r)^{d-1}.\nonumber
\end{align}
Let $n_j=\sum_{i=1}^n\sum_{r=1}^{r_i}I\{w_{ir}=j\}$. We then derive the full conditionals from equation (\ref{augmented_joint}):
\begin{align}
p(\lambda_j|\lambda_{-j},\alpha,V,S,\pi) &\propto \lambda_j^{a-1}e^{-b\lambda_j}\times \prod_{i=1}^n\prod_{r=1}^{r_i}\lambda_{w_{ir}}e^{-\sum_{k\in R_{ir}}\lambda_k S_{irk}V_{ir}}\nonumber\\
    &= \lambda_j^{a-1}e^{-b\lambda_j}\times \lambda_j^{n_j}e^{-\lambda_j\sum_{(i,r): j\in R_{ir}} S_{irj}V_{ir}}\nonumber\\
    &\propto \text{Gamma}\Big(a+n_j, \ b+\sum_{(i,r): j\in R_{ir}}S_{irj}V_{ir}\Big),\label{fullcond_lambda}\\
p(V_{ir}|\lambda,\alpha,V_{-ir},S,\pi) &\propto e^{-V_{ir}\sum_{j\in R_{ir}}\lambda_jS_{irj}}\nonumber\\
    &\propto \text{Exp}\Big(\sum_{j\in R_{ir}}\lambda_jS_{irj}\Big),\label{fullcond_V}\\
p(S_{irj}|\lambda,\alpha,V,S_{-irj},\pi) &\propto  \begin{cases}
    e^{-V_{ir}\lambda_j S_{irj}}f_{\alpha_r}(S_{irj}), &j\neq w_{ir}\\
    S_{irj}e^{-V_{ir}\lambda_j S_{irj}}f_{\alpha_r}(S_{irj}), &j=w_{ir}
    \end{cases},\label{fullcond_S}\\
p(\alpha_r|\lambda,\alpha_{-r},V,S,\pi) &\propto \alpha_r^{c-1}(1-\alpha_r)^{d-1}\prod_{i=1}^n\prod_{k\in R_{ir}} f_{\alpha_r}(S_{irk}).\label{fullcond_alpha}
\end{align}
As can be seen, the full conditionals for $\lambda$ and $V$ yield recognizable distributions. The full conditional on $S$ is an \textit{exponentially tilted stable (ETS)} distribution when $j\neq w_{ir}$ and a \textit{size-biased ETS} distribution when $j=w_{ir}$. Neither of these distributions has a closed-form density. However, both admit exact sampling as will be shown in Section \ref{sampling_s_details}.

The full conditional for $\alpha$ exposes two issues: First, it requires pointwise evaluation of the stable density, which is not immediately available. Second, the $n|R_{ir}|$ conditionally iid PS observations imply that $p(\alpha_r|\lambda,V,S,\pi)$ concentrates at a scale of order $(n|R_{ir}|)^{-1/2}$, which is much tighter than the marginal posterior of $\alpha_r$ given the observed data alone. Conditioning on auxiliaries $S$ that were themselves generated from the current $\alpha_r$ pins the exponent in place, and the chain traverses the marginal posterior only by slow diffusion.

Therefore, we update each $\alpha_r$ jointly with the stage-$r$ auxiliaries, following the collapsed and partially collapsed Gibbs frameworks of \cite{liu1994collapsed} and \cite{van2008partially}. The block is $(\alpha_r, V_{\cdot r}, S_{\cdot r \cdot})$ and conditions only on quantities outside it: $\lambda$ and $\pi$. This joint conditional factorizes according to
\begin{equation}\label{block_conditionals}
    p(S_{\cdot r \cdot}, V_{\cdot r}, \alpha_r \mid \lambda, \pi)
= p(S_{\cdot r \cdot} \mid V_{\cdot r}, \alpha_r, \lambda, \pi)\;
p(V_{\cdot r} \mid \alpha_r, \lambda, \pi)\;
 p(\alpha_r \mid \lambda, \pi)
\end{equation}
and will be sampled by composition. Crucially, the stage-$r$ auxiliaries are members of the block and are therefore marginalized, not conditioned upon, when $\alpha_r$ is drawn. 

To complete the block updates, we must derive the three factors in (\ref{block_conditionals}). The first term on $S_{irj}$ is precisely the full conditional in (\ref{fullcond_S}), for which we discuss sampling in Section \ref{sampling_s_details}. For the second term on $V_{\cdot r}$, we follow similar logic to the proof of Proposition \ref{prop_validity_aug_benter}:
\begin{align*}
    p(V_{ir} \mid \alpha_r, \lambda, \pi) &= \int_S p(V_{ir},S_{ir\cdot}|\alpha_r,\lambda,\pi) dS\\
    &\propto \int_S S_{irw_{ir}}\prod_{k\in R_{ir}}e^{-V_{ir}\lambda_kS_{irk}}f_{\alpha_r}(S_{irk}) dS\\
    &\propto V_{ir}^{\alpha_r-1}e^{-V_{ir}^{\alpha_r}\sum_{j\in R_{ir}}\lambda_j^{\alpha_r}}\\
    &\propto \text{Weibull}\Big(\alpha_r, (\sum_{j\in R_{ir}}\lambda_j^{\alpha_r})^{-1/\alpha_r}\Big).
\end{align*}
Third and last, the term on $\alpha_r$ is
\begin{align}
    p(\alpha_r \mid \lambda, \pi)&\propto p(\alpha_r)p(\pi|\lambda,\alpha_r)\nonumber\\
    &\propto \alpha_r^{\,c-1}(1 - \alpha_r)^{\,d-1} \prod_{i=1}^{n} \frac{\lambda_{w_{ir}}^{\,\alpha_r}}{\sum_{j \in R_{ir}} \lambda_j^{\,\alpha_r}}, \qquad \alpha_r \in (0,1),\label{alpha_MH}
\end{align}
which may be sampled from approximately via Metropolis-Hastings.

\subsection{Exact sampling of $S_{irj}$ from its full conditional}\label{sampling_s_details}

Recall that the full conditional of $S_{irj}$ is an ETS distribution when $j\neq w_{ir}$ and a size-biased ETS when $j=w_{ir}$. In the former case, $S_{irj}$ may be sampled by the double-rejection algorithm of \cite{devroye2009random}, which draws exactly from an ETS in $O(1)$ expected time. We use the R function \texttt{retstable} in the \texttt{copula} package \citep{copula_R}. In the latter case required a size-biased ETS, we use the following decomposition:
\begin{prop}
Suppose $S$ has density proportional to $s f_\alpha(s) e^{-hs}$ where $\alpha \in (0,1)$ and $h > 0$. If $X \sim \mathrm{ETS}(\alpha, h)$, with density proportional to $f_\alpha(s) e^{-hs}$ and $Y \sim \mathrm{Gamma}(\text{shape}=1 - \alpha, \text{rate}=h)$ such that $X\ind Y$, then $S\overset{d}=X+Y$.
\end{prop}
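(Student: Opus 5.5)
The plan is to compare Laplace transforms: a distribution on $(0,\infty)$ is determined by its Laplace transform on $t\ge 0$, so it suffices to show that $X+Y$ and $S$ have the same transform. The transform of $S$ comes from the differentiated identity (\ref{PS_laplace_differentiated}). The transform of $X+Y$ is the product of the two separate transforms, by independence.

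First, compute the normalizing constants. By (\ref{PS_laplace}) with $t=h$, the ETS density is $f_\alpha(s)e^{-hs}e^{h^\alpha}$, so
\[
E[e^{-tX}] = e^{h^\alpha}\int_0^\infty e^{-(t+h)s}f_\alpha(s)\,ds = e^{h^\alpha-(t+h)^\alpha}.
\]
By (\ref{PS_laplace_differentiated}) with $t=h$, the density of $S$ is $s f_\alpha(s)e^{-hs}\big/\big(\alpha h^{\alpha-1}e^{-h^\alpha}\big)$. Applying (\ref{PS_laplace_differentiated}) again at $t+h$ gives
\[
E[e^{-tS}] = \frac{\alpha (t+h)^{\alpha-1}e^{-(t+h)^\alpha}}{\alpha h^{\alpha-1}e^{-h^\alpha}} = \Big(\frac{h}{t+h}\Big)^{1-\alpha}e^{h^\alpha-(t+h)^\alpha}.
\]

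Second, the Gamma$(1-\alpha,h)$ variable has Laplace transform $(h/(h+t))^{1-\alpha}$. This is valid because $1-\alpha>0$ when $\alpha\in(0,1)$, which is exactly why the case $\alpha=1$ is excluded. Multiplying the transforms of $X$ and $Y$ then reproduces the transform of $S$ exactly. Uniqueness of Laplace transforms for nonnegative random variables gives $S\overset{d}=X+Y$.

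An alternative is to compute the convolution density directly, but that would require the unknown $f_\alpha$, so the transform route is the natural one. The only real care point is checking that the normalizing constants are finite and positive. This holds because $h>0$, which makes $\alpha h^{\alpha-1}e^{-h^\alpha}$ finite and nonzero. No step is a genuine obstacle.
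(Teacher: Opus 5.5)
Your proposal is correct and follows essentially the same route as the paper. Both proofs normalize $S$ and $X$ via (\ref{PS_laplace_differentiated}) and (\ref{PS_laplace}) at $t=h$, compute the three Laplace transforms, check that the product of the $X$ and $Y$ transforms matches that of $S$, and conclude by uniqueness of Laplace transforms for nonnegative random variables.
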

\begin{proof}
By (\ref{PS_laplace_differentiated}), $S$ has normalizing constant $\frac{1}{\alpha h^{\alpha-1}e^{-h^\alpha}}$. Thus, the Laplace transform of $S$ is
$$\mathbb{E}[e^{-tS}]=\frac{\int_0^\infty e^{-ts}sf_\alpha(s)e^{-hs}ds}{\alpha h^{\alpha-1}e^{-h^\alpha}}\overset{(\ref{PS_laplace_differentiated})}=\frac{\alpha(h+t)^{\alpha-1}e^{-(h+t)^\alpha}}{\alpha h^{\alpha-1}e^{-h^\alpha}}=\Big(\frac{h+t}{h}\Big)^{\alpha-1}e^{h^\alpha-(h+t)^\alpha}.$$
Next, note that the Laplace transform of $Y$ is
$$\mathbb{E}[e^{-tY}]=\Big(\frac{h}{h+t}\Big)^{1-\alpha}=\Big(\frac{h+t}{h}\Big)^{\alpha-1}.$$
Furthermore, by (\ref{PS_laplace}) the normalizing constant of $X$ is $\frac{1}{e^{-h^\alpha}}$ with Laplace transform
$$\mathbb{E}[e^{-tX}]=\frac{\int_0^\infty e^{-tx}e^{-hx}f_\alpha(x)dx}{e^{-h^\alpha}}\overset{(\ref{PS_laplace})}=\frac{e^{-(h+t)^\alpha}}{e^{-h^\alpha}}=e^{h^\alpha-(h+t)^\alpha}.$$
Thus,
$$\mathbb{E}[e^{-tS}]=\mathbb{E}[e^{-tY}]\times \mathbb{E}[e^{-tX}].$$
Since the product of Laplace transforms corresponds to the sum of independent variables and Laplace transforms on positive random variables are unique, $S\overset{d}=X+Y$, as desired.
\end{proof}

The full conditional of $S_{irw_{ir}}$ is thus drawn by summing independent draws of an ETS distribution and a Gamma distribution.

\subsection{Estimation Algorithm}

In short, we propose estimating the Benter model as follows: Given complete or partial rankings $\pi$ and hyperparameters $(a,b,c,d)$, initialize $\lambda_j=1$ for $j=1,\dots,J$, $\alpha_r=1$ for $r=2,\dots,J-1$, and auxiliary variables $V,S$. Then for each Gibbs sweep,
\begin{enumerate}
    \item \underline{Update Worths:} Draw $\lambda_j \overset{ind}\sim \mathrm{Gamma}\big(a + n_j,\; b + \sum_{(i,r):\, j \in R_{ir}} S_{irj} V_{ir}\big)$, $j=1,\dots,J$.
    \item \underline{Stage 1 Updates:} Set $S_{i1j} \equiv 1$ and draw $V_{i1} \sim \mathrm{Exp}(\sum_j \lambda_j)$ for each $i$.
    \item \underline{Remaining Stage Updates:} For $r=2,\dots,J-1$, block-update $(\alpha_r,V_{\cdot r},S_{\cdot r\cdot})$:
    \begin{enumerate}
        \item Update $\alpha_r\propto \alpha_r^{\,c-1}(1 - \alpha_r)^{\,d-1} \prod_{i=1}^{n} \frac{\lambda_{w_{ir}}^{\,\alpha_r}}{\sum_{j \in R_{ir}} \lambda_j^{\,\alpha_r}}$ via Metropolis-Hastings.
        \item Draw $V_{ir}\overset{ind}\sim \text{Weibull}\Big(\alpha_r, (\sum_{j\in R_{ir}}\lambda_j^{\alpha_r})^{-1/\alpha_r}\Big)$.
        \item Draw $S_{irj}\overset{ind}\sim \text{ETS}(\alpha_r,V_{ir}\lambda_j)$ for $j\in  R_{ir}\setminus\{w_{ir}\}$ and $S_{irw_{ir}}\overset{ind}\sim  \text{ETS}(\alpha_r,V_{ir}\lambda_{w_{ir}})+\text{Gamma}(1-\alpha_r,V_{ir}\lambda_{w_{ir}})$.
    \end{enumerate}
\end{enumerate}
Lastly, store the \textit{normalized} worths, $\frac{\lambda_j}{\sum_{j'}\lambda_{j'}}$ for $j=1,\dots,J$, and dampening parameters $\alpha_r$ for $r=2,\dots,J-1$ for each iteration after burn-in.

\begin{remark}
    Stage 1 parameters are updated separately from those in stages $r=2,\dots,J-1$. This is principally because $\alpha_r\equiv1$ and thus does not require (block) updating. We may thus update the stage 1 auxiliaries via their full conditionals in (\ref{fullcond_V}) and (\ref{fullcond_S}). For $S_{\cdot 1\cdot}$, recall that $f_{\alpha_1=1}$ is degenerate at 1 and thus $S_{\cdot 1\cdot}\equiv 1$. Last, the full conditional on $V_{i1}$ is $\text{Exp}(\sum_{j}\lambda_jS_{i1j})$ which reduces to $\text{Exp}(\sum_j\lambda_j)$ in Step 2.
\end{remark}
\begin{remark}
    To improve efficiency, in practice we perform Step 3(a) 5 times for every Gibbs sweep. To further improve efficiency, we utilize an adaptive Metropolis-Hastings process by changing $\tau$ in the symmetric $\text{Normal}(0,\tau^2)$ proposal distribution during burn-in. The tuning parameter $\tau$ is adjusted separately for each $r=2,\dots,J-1$, as posterior distributions for each $\alpha_r$ have different variances.
\end{remark}

\section{Data Analyses}\label{section:DataAnalyses}

We now demonstrate the algorithm's accuracy, efficiency, and usefulness via analyses of data, both simulated and real. Code to replicate all analyses can be found \href{https://github.com/pearce790/FastBayesianBenter}{here}.

\subsection{Simulation Study}

We demonstrate the accuracy and computational efficiency of the proposed estimation algorithm in a brief simulation study. In the study, we vary the number of items ranked $J=\{4,8,12\}$ and the number of voters $n\in\{200,400,800,1600\}$. Furthermore, we vary $\lambda$ according for
$$\lambda = (1,s,s^2,\dots,s^{J-1})$$
for $s\in\{2,4,6\}$, which is then normalized such that $\sum_{j=1}^J\lambda_j=1$. When $s$ is small, items are less well-separated in worth and thus there is more variation in the observed rankings around the ranking $\pi_0=\{J\prec J-1\prec\dots\prec1\}$; when $s$ is large the observed rankings will exhibit less variation around $\pi_0$. For each combination of $n$, $J$, and $s$, we draw data $10$ times. For each repetition, we draw dampening parameters according $\alpha_r\overset{iid}\sim\text{Unif}(0.5,1)$ and subsequently draw data $\pi_i\overset{ind}\sim\text{Benter}(\lambda,\alpha)$. The model is then estimated with hyperparameters $a=c=d=1$ and $b=0$ to induce flat priors on $\lambda$ (improper) and $\alpha$ (proper) and with $10{,}000$ Gibbs sweeps, of which the first 20\% are removed as burn-in. We record the computational speed and the mean absolute error (MAE) and empirical coverage of 95\% credible intervals (CI) for parameters $\lambda$ and $\alpha$. All simulations were run on a MacBook Pro with an Apple M2 Max chip and 32 GB RAM in R version 4.5.1.

Figure \ref{fig:speed} displays computational speed. We observe runtimes that increase roughly linearly in $n$ and polynomially in $J$. However, the baseline is small enough such that running $10{,}000$ iterations with modestly large numbers of items and voters requires only a few minutes to run. In fact, across all simulation settings we observe runtimes that make the algorithm feasible. 
\begin{figure}[t!]
    \centering
    \includegraphics[width=.8\linewidth]{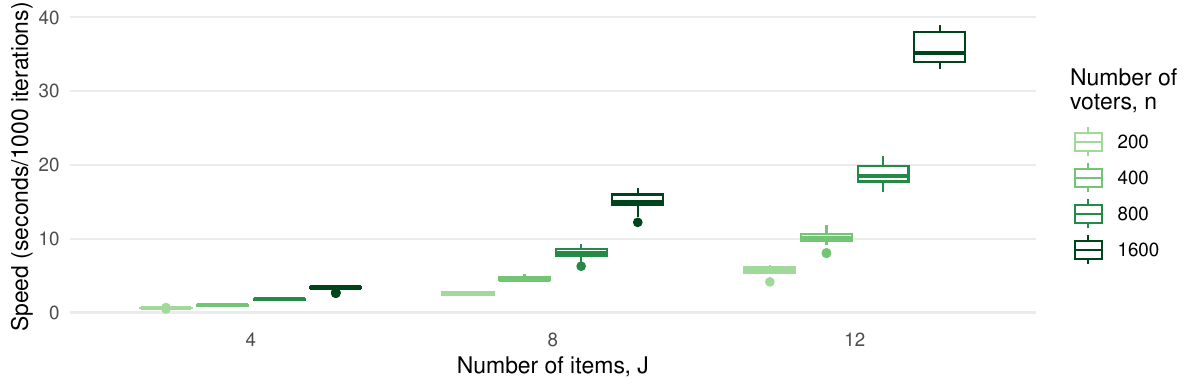}
    \caption{Computational speed of proposed Benter estimation algorithm across $J$, $n$, $s$. Speed is shown on y-axis by seconds per $1{,}000$ iterations. Simulations were run on a MacBook Pro with an Apple M2 Max chip and 32 GB RAM in R version 4.5.1.}
    \label{fig:speed}
\end{figure}

Figure \ref{fig:MAE} displays mean absolute error (MAE) in $\lambda$ (top) and $\alpha$ (bottom).
\begin{figure}[b!]
    \centering
    \includegraphics[width=.8\linewidth]{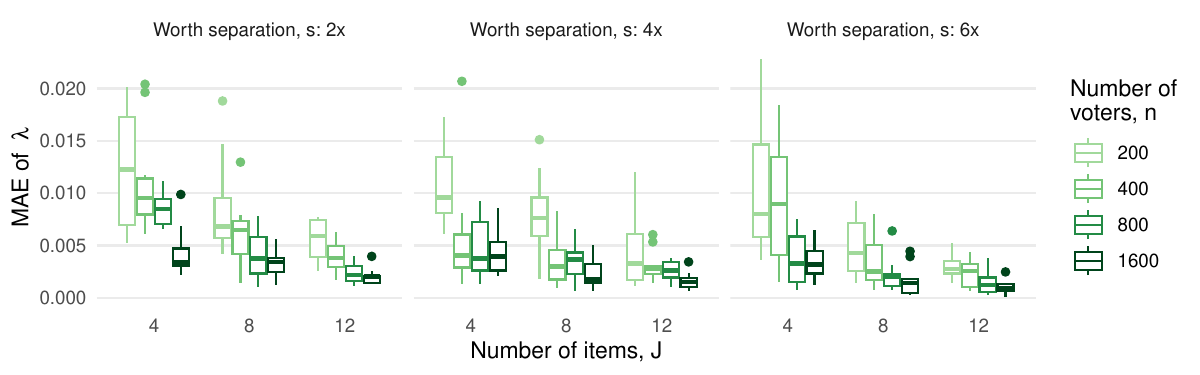}
    \includegraphics[width=.8\linewidth]{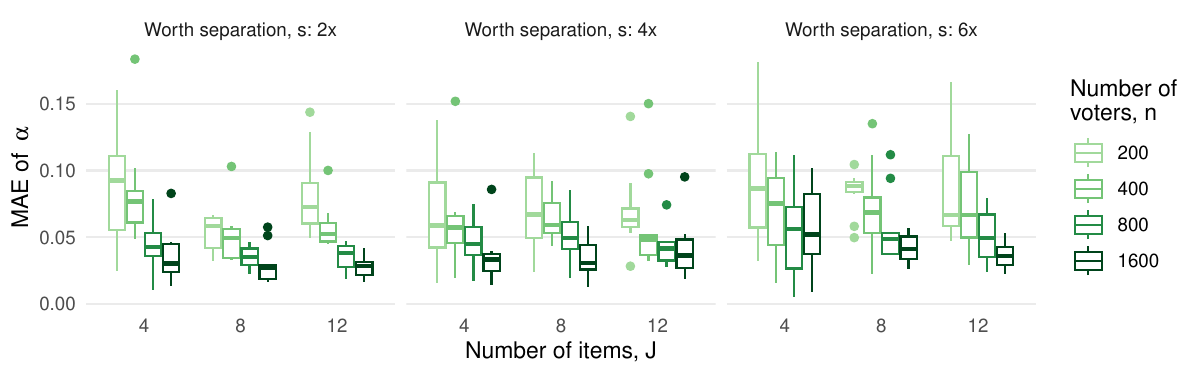}
    \caption{Mean Absolute Error in $\lambda$ (top) and $\alpha$ (bottom) across $J$, $n$, and $s$.}
    \label{fig:MAE}
\end{figure}
For $\lambda$, we notice that error is very small in magnitude regardless of $J$, $n$, and $s$. Error decreases as $n$ increases (consistency) and as $J$ increases (which is likely an artifact of the normalization of $\lambda$). There seems to be little effect of $s$ on error in $\lambda$. For $\alpha$, we again observe decreasing error in $n$. Unlike $\lambda$, there is no clear decrease in error for $\alpha$ as $J$ increases. We believe this may be explained by noting that dampening parameters are not normalized, and increasing $J$ by 1 corresponds to one additional $\alpha_r$ parameter. Finally, we observe that error generally increases in $s$. This may be expected, as more homogeneous rankings provides less information on the level of additional rank level-specific variation in preferences (which $\alpha$ specifically captures).

Last, we investigate accuracy of posterior distributions by examining the empirical coverage of 95\% credible intervals for $\lambda$ and $\alpha$ across sample size, $n$ (Figure \ref{fig:coverage}). We observe coverage close to the nominal 0.95 level, with some variation across settings which may be explained simply by Monte Carlo error.

Overall, the algorithm appears to be accurately and efficiently estimating model parameters in a variety of data regimes.
\begin{figure}[h!]
    \centering
    \includegraphics[width=.8\linewidth]{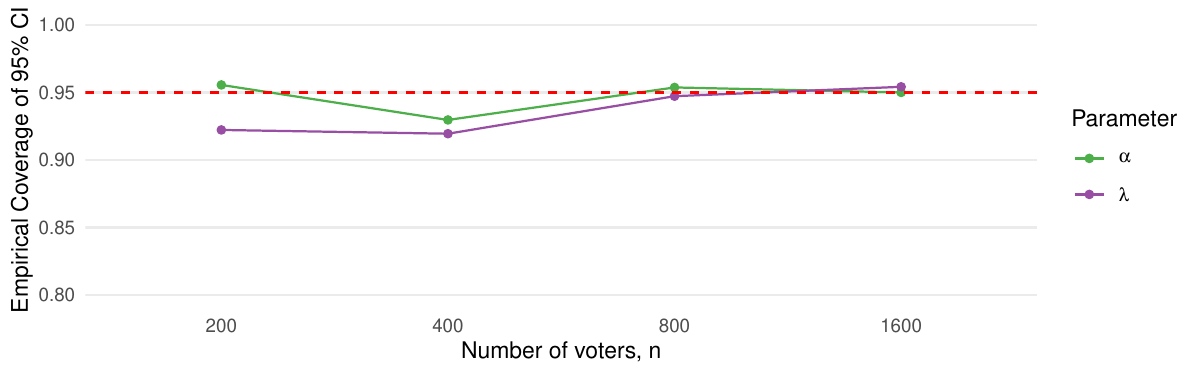}
    \caption{Empirical coverage of 95\% credible intervals (CI) for $\lambda$ and $\alpha$ across $n$.}
    \label{fig:coverage}
\end{figure}

\subsection{Sushi Preferences}\label{section:sushi}

We next fit a Bayesian Benter model using the proposed estimation algorithm to the classic sushi preferences dataset \citep{kamishima2003nantonac}. In a non-representative survey of $5{,}000$ Japanese individuals, each was asked to rank their preferences among top types of sushi: shrimp, sea eel, tuna, squid, sea urchin, salmon roe, egg, fatty tuna, tuna roll, and cucumber roll. Every individual ranked all 10 sushi types.

The dataset was fit using flat priors ($a=c=d=1$, $b=0$) and under 4 independent chains of $5{,}000$ iterations each, with the first 20\% removed as burn-in. Each independent chain took between 5-6 minutes to run. We examined trace plots for each parameter to assess mixing and convergence, which were deemed acceptable (see appendix \ref{app}). The results of the data analysis are shown in Figure \ref{fig:sushi}.
\begin{figure}[h!]
    \centering
    \includegraphics[width=.8\linewidth]{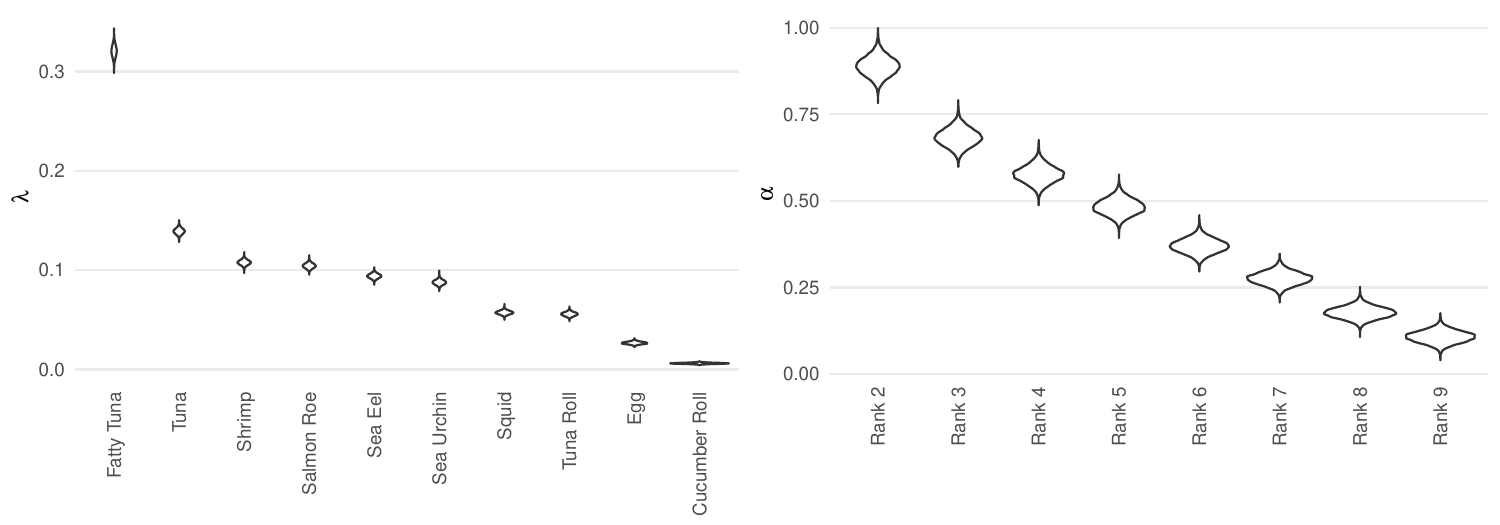}
    \caption{Estimated posterior distributions of $\lambda$ (left) and $\alpha$ (right) in a Bayesian Benter model fit to the sushi preferences dataset.}
    \label{fig:sushi}
\end{figure}

We estimate fatty tuna, tuna, and shrimp (in order) are the most-preferred sushi types. Egg and cucumber roll are the least-preferred, and notably, the only non-fish items among the choice set. We also observe consistently decreasing dampening parameters, indicating greater randomness in choice as individuals fill out their ranking.

\subsection{Irish Election Dataset}\label{section:dublin}

Last, we analyze a real 2002 voting dataset from the Dublin West constituency in Ireland. The election was carried out using the single transferable vote (STV) procedure. Out of 9 candidates, 3 were to be elected. In total, there were $29{,}988$ valid votes cast, which vary in length between 1 and 9 candidates. Only 16\% of ballots were complete. 3\% of voters ranked just one candidate, and the modal number of candidates ranked was 3 (29\%). The data is available in the \texttt{vote} R package \citep{vote_package}.

The dataset was fit using a flat prior on $\lambda$ ($a=1,\ b=0$) and a left-skewed prior on $\alpha$ ($c=10,\ d=1$) to slightly discourage $\alpha$ from approaching $0$. In the presence of nearly $30$k votes, these priors are both vague. We fit 4 independent chains of $4{,}000$ iterations each, with the first 20\% removed as burn-in. Each independent chain took about 10 minutes to run. We examined trace plots for each parameter to assess mixing and convergence, which were deemed acceptable (see appendix \ref{app}). The results of the data analysis are shown in Figure \ref{fig:dublin}.
\begin{figure}[b!]
    \centering
    \includegraphics[width=.8\linewidth]{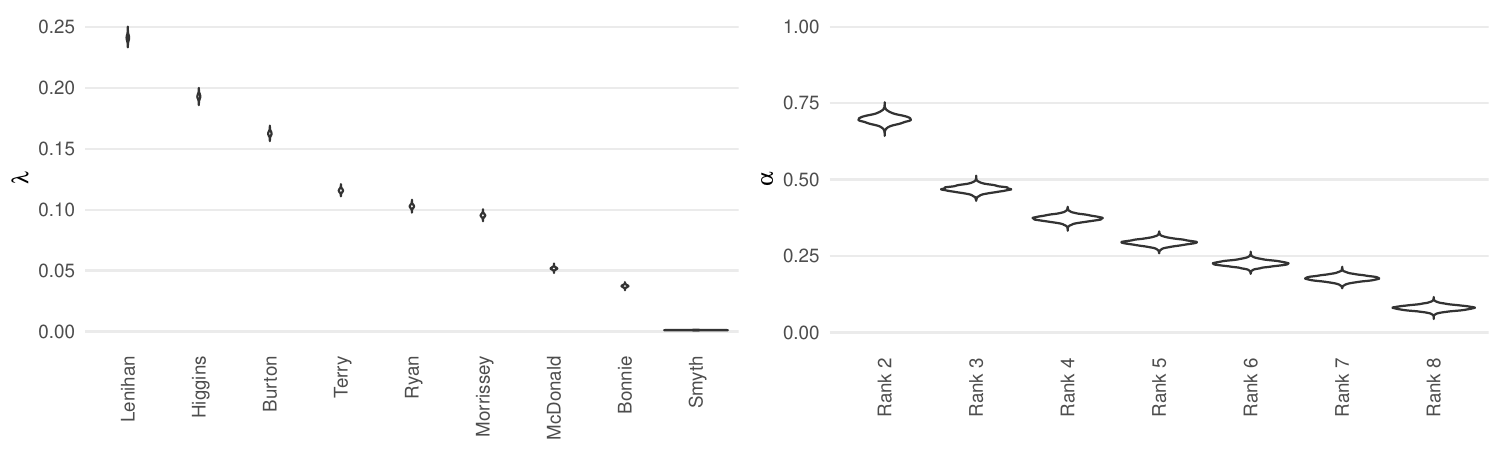}
    \caption{Estimated posterior distributions of $\lambda$ (left) and $\alpha$ (right) in a Bayesian Benter model fit to the 2002 Dublin West election dataset.}
    \label{fig:dublin}
\end{figure}

We observe candidates Lenihan, Higgins, and Burton to be in the top-3 estimated ranks. Indeed, these candidates won the election. We observe consistently decreasing dampening parameters, indicating greater randomness in choice as voters complete their ballots. The posterior mean of $\alpha_r$ is approximately $0.7$, suggesting reasonably high level of randomness when selecting even a second-choice candidate. It is important to note that these results may be the result of preference heterogeneity among voters, which is not accounted for in the present model.

\section{Discussion}\label{section:Discussion}

This paper has proposed the first efficient Bayesian estimation procedure for the Benter model for rankings. Although efficient Bayesian estimation has been proposed for the simpler Plackett-Luce model \citep{caron2012efficient, mollica2017bayesian}, previous attempts to extend such methods to the more general Benter model have been unsuccessful. The principal challenge is the presence of rank level-specific dampening parameters, which are not separable across items and yield no clear conjugate priors. Here, we present a data augmentation scheme relying on Exponential and Positive Stable (PS) distributions that permits an equivalent representation of the Benter model. Although mathematically complex, this alternative representation yields a Metropolis-Hastings-within-blocked Gibbs sampler for parameters $\lambda$ and $\alpha$. We demonstrate accuracy and computational efficiency of the proposed estimation algorithm on simulated and real datasets.

Notably, few previous works have estimated full posterior distributions when fitting data to the Benter model, instead relying on point estimates via maximum likelihood or maximum \textit{a posteriori}. We suspect this is not a symptom of the model's lack of importance or usefulness, but rather the computational effort required to fit such models. In fact, previous authors (e.g., \cite{gormley2008exploring, gormley2008mixture, vanichbuncha2017modelling,pearce2026can}) have demonstrated that the Benter is often able to fit high-dimensional ranking data better than the simpler Plackett-Luce distribution \citep{plackett1975analysis}. We note that without an efficient Gibbs sampler, Bayesian estimation requires an independent Metropolis-Hastings (MH) steps for each of the $2J-3$ free parameters per MCMC iterations. In each MH step, the Benter likelihood is computed, which is slow. To run thousands of iterations across multiple chains would be burdensome,  or even intractable, especially when the number of observations or items is large. Frequentist estimation requires the bootstrap or EM-based asymptotic approximations, which is likely to be computationally intractable for even modest-sized datasets. The present paper makes full posterior estimation of Benter models practical.

We conclude by briefly stating three future directions. First, the algorithm may yield a simple and fast method for obtaining MLE or MAP estimates of model parameters, akin to the work of \cite{mollica2017bayesian} for the Plackett-Luce model based on another Gibbs-type algorithm. Second, the algorithm could be extended to the mixture setting, such as a latent-class mixture. In particular, previous works in election modeling have found that mixtures of Benter models have fit the data particularly well \citep{gormley2008exploring,pearce2026can}. Such an extension would permit not only point estimation, but full uncertainty quantification of model parameters. Third, the Benter model can be extended to other common rank-data settings, such as incomplete rankings (e.g., pairwise comparisons), settings with ranking covariates (as in the Plackett-Luce; see, e.g.,  \cite{Gormley2006,chapaaan1982exploiting,tutz2015extended}), or rank-clustering settings \citep{pearce2025bayesian,santi2026bayesian}. Applying the efficient estimation algorithm proposed herein to these Benter extensions may be useful for practitioners.

\FloatBarrier

\section*{Acknowledgments}

The author acknowledges the use of Claude (Anthropic; Opus 4.7 and Fable 5) to assist with methodology development, improving language clarity in specific sections, and increasing code efficiency. I have reviewed and edited all AI-generated output and assume full responsibility for the content of this manuscript.

\bibliography{main}
\clearpage
\appendix

\section{Trace plots from Sections \ref{section:sushi}--\ref{section:dublin}}\label{app}

\begin{figure}[h!]
    \centering
    \includegraphics[width=\linewidth]{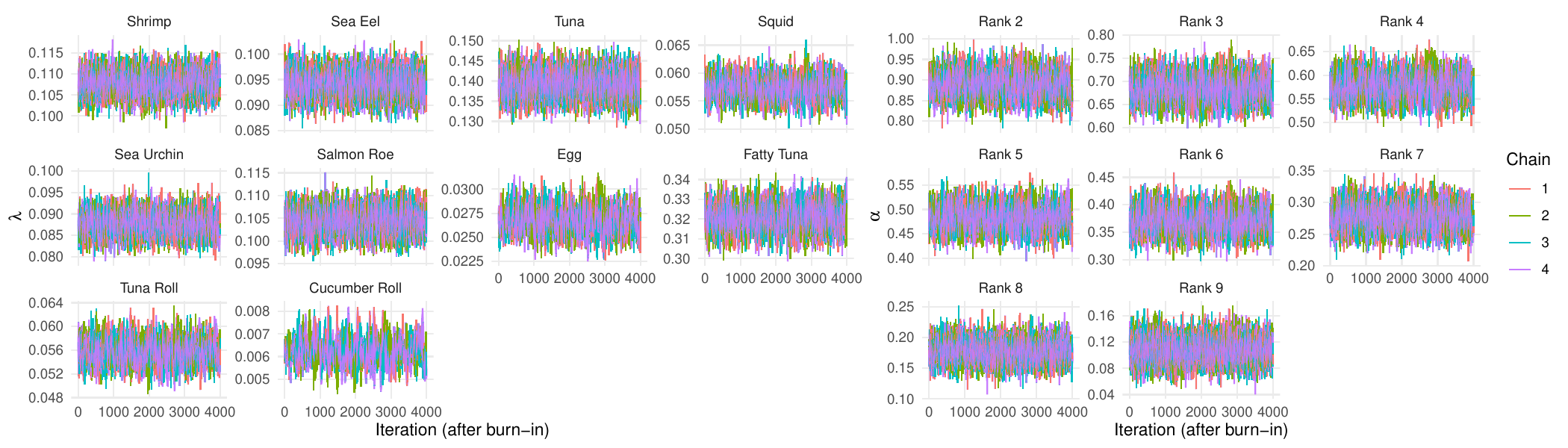}
    \caption{Trace plots of $\lambda$ (left) and $\alpha$ (right) for the sushi preferences dataset analyzed in Section \ref{section:sushi}.}
    \label{fig:sushi_trace}
\end{figure}

\begin{figure}[h!]
    \centering
    \includegraphics[width=\linewidth]{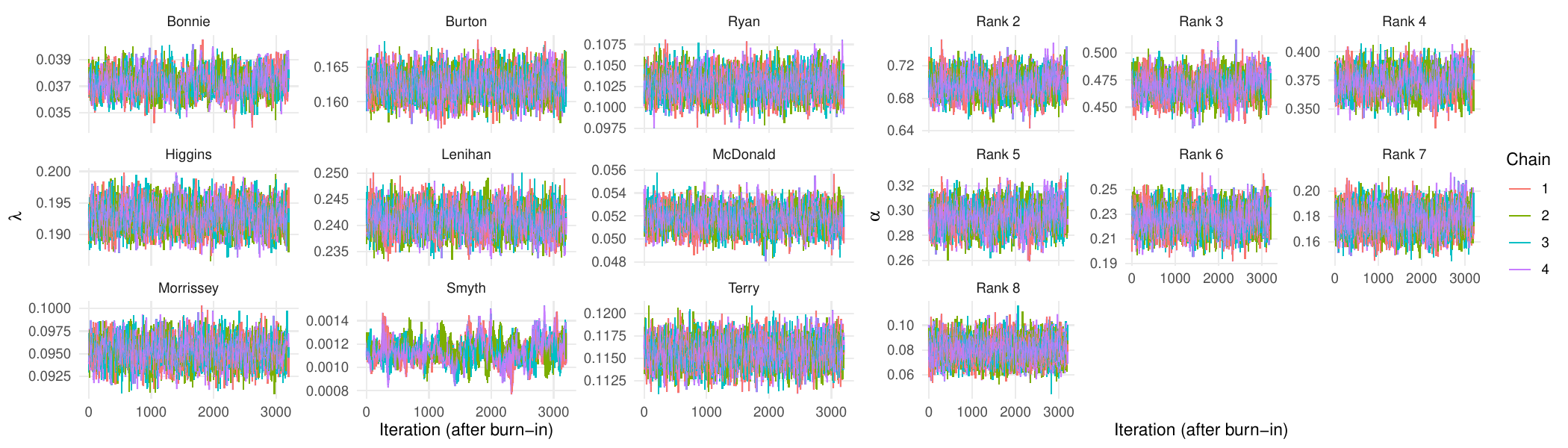}
    \caption{Trace plots of $\lambda$ (left) and $\alpha$ (right) for the 2002 Dublin West election dataset analyzed in Section \ref{section:dublin}.}
    \label{fig:dublin_trace}
\end{figure}

\end{document}